\tikzset{Vertex/.style={circle,thick,draw,fill=black,text=white,inner sep=0pt,minimum size=4pt}}
\newcommand{\Fr}{f_r}
\newcommand{\FrImpl}{\hat{\Fr}}
\newcommand{\RR}{r}
\newcommand{\LL}{\ell}
\newcommand{\Ur}{u_r}
\newcommand{\Weight}{\omega}
\newcommand{\Pos}{\pi}
\newcommand{\Width}{\kappa}
\newcommand{\WidthImpl}{\hat{\Width}}
\newcommand{\Naturals}{\mathbb{N}}
\newcommand{\ubar}[1]{\underaccent{\bar}{#1}}
\newcommand{\ER}{\ubar{\bar{E}}}
\newcommand{\ES}{\ubar{E}}
\newcommand{\EB}{\bar{E}}
\newcommand{\BigE}[1]{\bar{#1}}
\newcommand{\SmE}[1]{\ubar{#1}}
\newcommand{\ReE}[1]{\BigE{\SmE{#1}}}
\newtheorem{theorem}{Theorem}
\crefname{equation}{}{}
\Crefname{equation}{}{}
\crefname{eqitem}{}{}
\Crefname{eqitem}{}{}
\crefname{step}{step}{steps}
\Crefname{step}{Step}{Steps}
\newlist{discription}{enumerate}{2}
\setlist[discription]{align=left,leftmargin=\parindent,labelsep=*,itemindent=!,labelindent=0pt,parsep=0pt,topsep=2pt,itemsep=2pt}
\begin{document}

\title{Total $2$-domination of proper interval graphs}

\author{Francisco J.\ Soulignac\thanks{CONICET and Departamento de Ciencia y Tecnología, Universidad Nacional de Quilmes, Bernal, Argentina.}~\thanks{Supported by PICT ANPCyT grant 2015-2419.}}

\date{\normalsize \texttt{francisco.soulignac@unq.edu.ar}}

\maketitle

\begin{abstract}
  A set of vertices $W$ of a graph $G$ is a total $k$-dominating set when every vertex of $G$ has at least $k$ neighbors in $W$.  In a recent article, Chiarelli et al.\ (Improved Algorithms for $k$-Domination and Total $k$-Domination in Proper Interval Graphs, Lecture Notes in Comput.\ Sci.\ 10856, 290--302, 2018) prove that a total $k$-dominating set can be computed in $O(n^{3k})$ time when $G$ is a proper interval graph with $n$ vertices and $m$ edges.  In this note we reduce the time complexity to $O(m)$ for $k=2$.  
  
  ~
  
  \noindent\textbf{Keywords:} total $2$-domination, straight oriented graphs, proper interval graphs. 
\end{abstract}

\section{Introduction}

A set of vertices $W$ of a graph $G$ is a \emph{total $k$-dominating} set when every vertex $v$ of $G$ has at least $k$ neighbors in $W$.  The problem of computing a total $k$-dominating set of $G$ with minimum cardinality is known to be \NP-complete for every $k \geq 1$, even when $G$ belongs to certain subclasses of chordal graphs~\cite{PradhanIPL2012} such as undirected path graphs~\cite{LanChangDAM2014,LaskarPfaffHedetniemiHedetniemiSJADM1984}.  In turn, when $G$ is an interval graph with $n$ vertices, the problem is solvable in $O(n^{6k+4})$ time, as recently proven by Kang et al.~\cite{KangKwonStrommeTelleTCS2017} (cf.\ \cite{ChiarelliHartingerLeoniPujatoMilanic2018}).  Moreover, the time complexity can be reduced to $O(n^{3k})$ when $G$ belongs to the subclass of proper interval graphs~\cite{ChiarelliHartingerLeoniPujatoMilanic2018}.  

Besides being a subclass of undirected path graphs, interval graphs are among the most famous classes of graphs.  Unsurprisingly, then, the problem for $k=1$ on interval graphs was studied long before the general case.  In particular, Chang~\cite{ChangSJC1998} shows that a total $1$-dominating set of minimum cardinality can be obtained in $O(n)$ time when an interval model of $G$ is given.  The huge gap in the complexities of the algorithms by Chang, on the one hand, and Chiarelli et al.\ and Kang et al., on the other hand, suggests that there is still room for improvements when $k > 1$.  One reason to explain this gap is the fact that the problems attacked by Kang et al.\ and Chiarelli et al.\ are too general.  In this note we consider the problem from the opposite perspective, by studying the simplest case that is still unsolved.  Specifically, we consider the total $2$-domination problem on proper interval graphs, for which we obtain a quadratic ($O(m)$) time algorithm.  

Our algorithm, as well as the one by Chiarelli et al.~\cite{ChiarelliHartingerLeoniPujatoMilanic2018} and others, models the total $2$-dominating problem as a shortest path problem on a weighted acyclic digraph $D$.  The major difference is that, in the model by Chiarelli et al., each vertex of $D$ represents a connected set of $G$ with diameter at most $5$.  In turn, in our model each vertex represents different connected sets of varying diameters, that correspond to the weight of each outgoing edge and can be as high as $\Omega(n)$. 

In \Cref{sec:preliminaries} we introduce the terminology required throughout the paper.  Then, in \Cref{sec:nm algorithm} we show how the total $2$-domination problem is modeled as a shortest path problem on an acyclic digraph $D$ of size $O(nm)$.  We improve this model in \Cref{sec:m algorithm}, where we observe that $D$ can be compressed to an acyclic digraph $R$, of size $O(m)$, that can be computed in $O(m)$ time.  Finally, in \Cref{sec:remarks} we discuss some ideas to try to generalize our algorithm to the case $k > 2$.

\section{Preliminaries}
\label{sec:preliminaries}

In this article we work with simple graphs and digraphs.  For a (di)graph $G$, let $V(G)$ and $E(G)$ denote the sets of vertices and (directed) edges of $G$, respectively.  As usual, we write $n = |V(G)|$ and $m = |E(G)|$ when $G$ is clear and, for simplicity, we use $vw$ to denote both the set $\{v,w\}$ and the ordered pair $(v,w)$.  Two vertices $v$ and $w$ are \emph{adjacent} when either $vw \in E(G)$ or $wv \in E(G)$.  The \emph{neighborhood} $N_G(v)$ of $v$ is the set of all its adjacent vertices, while its \emph{degree} is $d_G(v) = |N_G(v)|$.  When $G$ is a digraph, we say that $vw \in E(G)$ \emph{goes from} $v$ \emph{to} $w$, while $w$ is an \emph{out-neighbor} of $v$.  The \emph{out-degree} of $v$ is the number $d_G^+(v)$ of out-neighbors of $v$. For the sake of notation, we omit the subscript $G$ from $N$ and $d$ when no confusions are possible.

A \emph{path} in a (di)graph $G$ is a sequence of vertices $P = v_1, \ldots, v_{k+1}$ such that $v_{i}v_{i+1} \in E(G)$, for $1 \leq i \leq k$.  A \emph{cycle} is a sequence $v_1, \ldots, v_{k+1}$ such that $v_1 = v_{k+1}$ and $v_1, \ldots, v_k$ is a path.  If $G$ has no cycles, then $G$ is \emph{acyclic}.  We say that $G$ is \emph{weighted} to mean that each $e \in E(G)$ has a weight $\Weight(e) \geq 0$.  The \emph{weight} of a path $P = v_1, \ldots, v_{k+1}$ is, then, $\Weight(P) = \sum_{i=1}^k \Weight(v_iv_{i+1})$.  When $G$ is a digraph, its \emph{underlying graph} $H$ has $V(G)$ as its vertex set, whereas $vw \in E(H)$ if and only if $v$ and $w$ are adjacent in $G$, for $v,w \in V(G)$.  A graph $G$ is \emph{connected} when there is a path between every pair of vertices, while a digraph is \emph{connected} when its underlying graph is connected.  An \emph{oriented graph} is a digraph $G$ such that either $vw \not\in E(G)$ or $wv \not\in E(G)$, for $v,w \in V(G)$.  If $G$ is the underlying graph of an oriented graph $\vec{G}$, then $\vec{G}$ is an \emph{orientation} of $G$.

Consider a (di)graph $G$.  For $W \subseteq V(G)$, let $G[W]$ denote the sub(di)graph of $G$ induced by $W$.  We say that $W$ is \emph{connected} when $G[W]$ is connected.  A \emph{block} of $W$ is a connected subset of $W$ that is maximal by inclusion.  We say that $v \in V(G)$ is \emph{$2$-dominated} by $W$ when $|N(v) \cap W| \geq 2$.  If every vertex of $G$ is $2$-dominated by $W$, then $W$ is a \emph{$2$-dom}.  Moreover, if $|W|$ is minimum among the $2$-doms of $G$, then $W$ is a \emph{minimum $2$-dom}.  Note that $W$ is a $2$-dom of a graph $G$ if and only it is a $2$-dom of $\vec{G}$, for any orientation $\vec{G}$ of $G$.  Thus, we may replace $G$ by $\vec{G}$ when computing a minimum $2$-dom.  From now on we safely assume that $d(v) \geq 2$ ($v \in V(G)$); otherwise $G$ has no $2$-doms.  

A \emph{straight graph} (\Cref{fig:pig}) is an oriented graph $G$ that admits a linear ordering $v_1 <_G \ldots <_G v_n$ of its vertices and a mapping $\Fr^G \colon V(G) \to V(G)$ such that:
\begin{itemize}
 \item $v_i \leq_G \Fr^{G}(v_i)$ for $1 \leq i \leq n$ and $\Fr^{G}(v_i) \leq_G \Fr^{G}(v_{i+1})$ for $1 \leq i < n$, and
 \item $v_iv_j \in E(G)$ if and only if $v_i <_G v_j \leq_G \Fr^{G}(v_i)$.
\end{itemize}
As before, $G$ is usually omitted from $<$ and $\Fr$.  A graph $G$ is a \emph{proper interval} (PIG) \emph{graph} is some of its orientations is a straight graph.  It is well known that $G$ is a PIG graph if and only if its vertices can be mapped into a family of inclusion-free intervals of the real line in such a way that two vertices of $G$ are adjacent when their corresponding intervals have a nonempty intersection (e.g., \cite{DengHellHuangSJC1996}; see \Cref{fig:pig}).  Yet, in this article we prefer the combinatorial view provided by straight graphs.

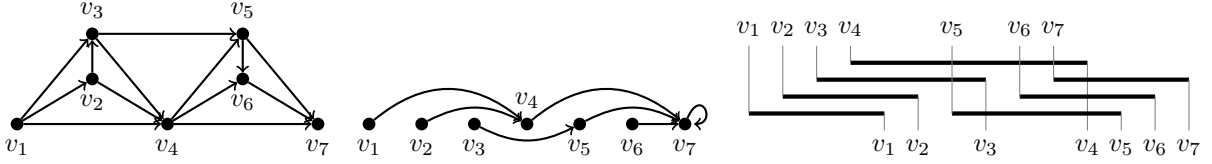
\begin{figure}
\centering
\begin{tikzpicture}[yscale=.6]
 \foreach \i/\x/\y/\l in {1/0/0/below,2/1/1/below,3/1/2/above,4/2/0/below,5/3/2/above,6/3/1/below,7/4/0/below}{%
    \node [Vertex] (\i) at (\x,\y) [label=\l:$v_\i$]{};%
 }%
 \foreach \i/\t in {1/4,2/4,3/5,4/7,5/7,6/7}{%
   \pgfmathtruncatemacro\s{\i+1}%
   \foreach \j in {\s,...,\t}{%
      \draw [thick,->] (\i) to (\j);%
   }%
 }%
\end{tikzpicture}
\begin{tikzpicture}[xscale=0.7]
 \foreach \i in {1,2,3,5,6,7}{%
    \node [Vertex] (\i) at (\i,0) [label=below:$v_\i$]{};%
 }%
 \node [Vertex] (4) at (4,0) [label=above:$v_4$]{};%
 \foreach \i/\t/\b in {1/4/30,2/4/20,3/5/-20,4/7/30,5/7/20,6/7/0}{%
   \draw [thick,->,bend left=\b] (\i) to (\t);%
 }%
 \path[thick,->,in=0,out=60] (7) edge[loop] ();
\end{tikzpicture}
\begin{tikzpicture}[ultra thick,scale=.45,yscale=.5]
    \def\Intervals{%
          1/ 5/1/1,%
          2/ 6/2/2,%
          3/ 8/3/3,%
          4/11/4/4,%
          7/12/1/5,%
          9/13/2/6,%
         10/14/3/7%
    }
    
    \foreach \s/\t/\l/\v in \Intervals {
        \draw (\s,\l) to (\t, \l);
        \draw [thin,gray] (\s,\l) to (\s, 5) node [above,black] {$v_\v$};
        \draw [thin,gray] (\t,\l) to (\t, 0) node [below,black] {$v_\v$};
    }    
\end{tikzpicture}

\caption{A straight orientation of a PIG graph $G$ (left) defined by an ordering $<$ and mapping $\Fr$ (center), and a corresponding family of inclusion-free intervals (right) that represent $G$.}\label{fig:pig}
\end{figure}

For the sake of notation, we sometimes assume that a straight graph $G$ with vertices $v_1 < \ldots < v_n$ has two \emph{artificial} vertices $v_0$ and $v_{n+1}$.  Thus, for $0 \leq i \leq j \leq n+1$, we can conveniently define the subsequence $G(v_i,v_j) = v_{i+1}, \ldots, v_{j-1}$.  For $1 \leq i \leq n$, let $\LL^G(v_i) = v_{i-1}$, $\RR^G(v_i) = v_{i+1}$, and $\Ur^G(v_i) = \RR^G(\Fr(v_i))$, where the superscript $G$ is omitted as usual.  In colloquial terms, $\LL(v)$ and $\RR(v)$ are the vertices that precede and follow $v$, respectively, and $\Ur(v)$ is the first vertex not adjacent to $v$ in $G(v,v_{n+1})$.  Also, define $\Fr^{0}(v) = v$ and $\Fr^{i+1}(v) = \Fr(\Fr^{i}(v))$ for $i > 0$.  We extend $<$ from $V(G)$ to the family of subsets of $V(G)$ in such a way that, for $W_1, W_2 \subseteq(G)$, $W_1 < W_2$ if and only if $w_1 < w_2$ for every $w_1 \in W_1$ and $w_2 \in W_2$.  It is not hard to see that the blocks of $W \subseteq V(G)$ are pairwise comparable by $<$.  Therefore, we sometimes state that $B_1 < \ldots < B_k$ are the blocks of $W$.

\section{Computing a minimum \texorpdfstring{$2$}{2}-dom in \texorpdfstring{$O(nm)$}{O(nm)} time}
\label{sec:nm algorithm}

In this section we describe an algorithm to find a minimum $2$-dom in $O(nm)$ time when a straight graph $G$ is given.  Let $v_1 < \ldots < v_n$ be the vertices of $G$, and define $S = \{v_1, v_2, v_3\}$ and $T = \{v_{n-2}, v_{n-1}, v_n\}$.  To simplify the description of the algorithm, we assume that $S$ and $T$ are blocks of $V(G)$.  Consequently, if $W$ is a $2$-dom of $G$ with blocks $B_0 < \ldots < B_{k+1}$, then $B_0 = S$ and $B_{k+1} = T$.  Note that this assumption yields no loss of generality because $B_1 \cup \ldots \cup B_k$ is a $2$-dom of $G \setminus (B_0 \cup B_{k+1})$ if and only if $B_0 \ldots \cup B_{k+1}$ is a $2$-dom of $G$.  Thus, we can always transform the input graph by inserting $S \cup T$.  The sets $S$ and $T$ are called the \emph{source} and \emph{pre-sink} blocks of $G$, respectively, whereas $v_1v_2$ and $v_{n-2}v_{n-1}$ are the \emph{source} and \emph{pre-sink} edges of $G$.

In a nutshell, the algorithm finds the blocks $B_1 < \ldots < B_k$ of the minimum $2$-dom $W$ one at a time, from $B_1$ to $B_k$.  By the discussion above, $B_1$ is simply the source block of $G$.  Once $B_i$ is determined ($1 \leq i < k$), the next block $B_{i+1}$ is obtained by choosing $j$ vertices (for some $j \geq 3$) in a way that $B_{i+1}$ reaches as far as possible.  To build $B_{i+1}$, its first two vertices $v$ and $w$ are taken as the further reaching vertices that still cover the gap from $B_i$ to $B_{i+1}$.  Then, each of the remaining $j-2$ vertices are defined in terms of $v$ and $w$.  Under the terminology defined below, $B_1, \ldots, B_k$ and $W$ are ``expansive'', while $B_{i+1}$ ``extends'' $B_i$.

Formally, a connected set $B \subseteq V(G)$ with vertices $w_1 < \ldots < w_j$ ($j \geq 3$) is \emph{expansive} (\Cref{fig:expansive}) when:
\begin{discription}[label={(exp$_\arabic*$)}]
 \item $w_3 = \Fr(w_1)$, $w_{i} = \Fr^{i-2}(w_1)$ for $4 \leq i \leq j-2$, and $w_j = \Fr(w_{j-2})$, and \label[eqitem]{def:expansive fr}
 \item if $j \geq 5$, then $w_{j-1} = \LL(w_j)$. \label[eqitem]{def:expansive > 4}
\end{discription}
By \cref{def:expansive fr,def:expansive > 4}, $B$ is fully determined by $w_1$, $w_2$, and $|B|$; for this reason, we say that $B$ is \emph{represented} by $w_1w_2$.  Clearly, $w_1w_2$ represents at most one expansive connected set of size $j$, $j \geq 0$.  Let $u = \Fr(\Ur(w_{j-1}))$ and $z = \Fr(\Ur(w_j))$ (\Cref{fig:expansive}).  An expansive connected set $B'$ represented by $vw$ \emph{extends} $B$ when:
\begin{equation}\label{eq:extends}
vw = 
\begin{cases}
    \LL(\LL(u))\LL(u) & \text{if $u = z$ and $d^+(\LL(u)) = 1$} \\
    \LL(u)u           & \text{if ($u = z$ and $d^+(\LL(u)) > 1$) or ($u \neq z$ and $d^+(u) = 1$)} \\
    u\LL(z)           & \text{if $u \neq z$, $\Fr(u) = z$ and $d^+(u) > 1$} \\
    uz                & \text{otherwise} 
\end{cases}
\end{equation}
We refer to $vw$ as being the \emph{$j$-extension} of $w_1w_2$.  Note that the $j$-extension of $w_1w_2$, if existing, is unique, because $B$ is the unique expansive connected set with $j$ vertices that is represented by $w_1w_2$.  A set $W \subseteq V(G)$ with blocks $B_1 < \ldots < B_k$ is \emph{expansive} when:
\begin{discription}[label={(exp$_\arabic*$)},resume]
 \item $B_i$ is expansive for every $1 \leq i \leq k$, and \label[eqitem]{def:block expansive}
 \item $B_{i+1}$ extends $B_i$ for every $1 \leq i < k$.  \label[eqitem]{def:expansive extension}
\end{discription}
Moreover, if $B_1$ and $B_k$ are the source and pre-sink blocks of $G$, then $W$ is \emph{fully expansive}.

\begin{figure}
\centering
\begin{tabular}{cccc}
    \begin{tikzpicture}[xscale=.7]
        \foreach \i in {1,2,3}{%
            \node [Vertex] (\i) at (\i,0) [label=below:$w_\i$]{};%
        }%
        \draw [thick,->,bend left] (1) to node [label={[above,label distance=-5pt]:$\Fr$}]{} (3);%
    \end{tikzpicture} &
    \begin{tikzpicture}[xscale=.7]
        \foreach \i in {1,...,4}{%
            \node [Vertex] (\i) at (\i,0) [label=below:$w_\i$]{};%
        }%
        \draw [thick,->,bend left] (1) to node [label={[above,label distance=-5pt]:$\Fr$}]{} (3);%
        \draw [thick,->,bend left] (2) to node [label={[above,label distance=-5pt]:$\Fr$}]{} (4);%
    \end{tikzpicture} &
    \begin{tikzpicture}[xscale=.7]
        \foreach \i/\l in {1/1,2/2,3/3,4/j-2,6/j}{%
            \node [Vertex] (\i) at (\i,0) [label=below:$w_{\l}$]{};%
        }%
        \node [Vertex] (5) at (5,0) {};%
        \draw [thick,->,densely dotted] (3) to node [label={[above,label distance=-5pt]:$\Fr$}]{} (4);%
        \draw [thick,->] (5) to node [label={[below,label distance=-3pt]:$\RR$}]{} (6);%
        \draw [thick,->,bend left] (1) to node [label={[above,label distance=-5pt]:$\Fr$}]{} (3);%
        \draw [thick,->,bend left] (4) to node [label={[above,label distance=-5pt]:$\Fr$}]{} (6);%
    \end{tikzpicture} &
    \begin{tikzpicture}[xscale=.7]
        \foreach \i/\l in {1/w_{j-1},2/w_{j},3/,4/,5/,6/,7/u,8/z}{%
            \node [Vertex] (\i) at (\i,0) [label=below:$\l$]{};%
        }%
        \draw [thick,->,bend left] (1) to node [label={[above,label distance=-5pt]:$\Fr$}]{} (3);%
        \draw [thick,->] (3) to node [label={[below,label distance=-3pt]:$\RR$}]{} (4);%
        \draw [thick,->,bend left] (2) to node [label={[above,label distance=-5pt]:$\Fr$}]{} (5);%
        \draw [thick,->] (5) to node [label={[below,label distance=-3pt]:$\RR$}]{} (6);%
        \draw [thick,->,bend left] (4) to node [label={[above,label distance=-5pt]:$\Fr$}]{} (7);%
        \draw [thick,->,bend left] (6) to node [label={[above,label distance=-5pt]:$\Fr$}]{} (8);%
    \end{tikzpicture}\\
    (a) & (b) & (c) & (d)
\end{tabular}

\caption{The expansive connected set represented by $w_1w_2$ for sizes $3$, $4$, and $j$ are depicted in (a), (b), and (c), respectively, whereas (d) describes the vertices $u$ and $z$ used to determine the $j$-extension $vw$ of $w_1w_2$ for the case in which $u \neq z$.}\label{fig:expansive}
\end{figure}
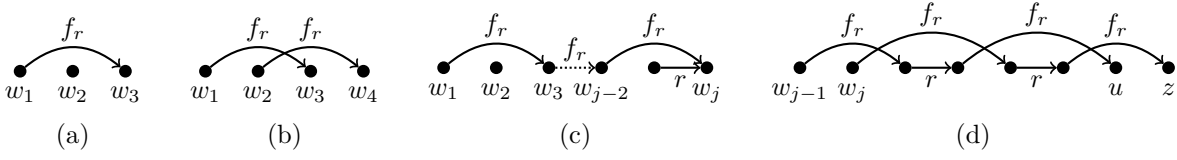

Consider the weighted digraph $D$ with vertex set $E(G)$ that has an edge from $e$ to $g$ of weight $\Weight(eg) = j$ when $g$ is the $j$-extension of $e$.\footnote{As defined, $D$ can contain multiple edges between the same pair of vertices (\Cref{fig:D}).  Moreover, some results hold only if $D$ contains these repeated edges. Yet, for simplicity, we restricted our terminology to simple digraphs.  This is not an issue, though, as all the results can be easily adapted to the case in which $D$ is simple, by ignoring the heavier repeated edges.  Is for this reason that we ignore the fact that $D$ is a multidigraph.}  Let $e$ be the pre-sink of $G$ and $t \not\in E(G)$.  Define $D(G)$ as the digraph that is obtained from $D$ after the edge $et$ with $\Weight(et) = 3$ is inserted (\Cref{fig:D}).  The vertex $t$ is the \emph{sink} of $D(G)$, while its \emph{source} is the source edge of $G$.  By definition, $B < B'$ when $B'$ extends $B$, thus $D(G)$ is acyclic.  Moreover, any path $P = e_1, \ldots, e_{k+1}$ of $D(G)$ \emph{encodes} an expansive set $W$ with blocks $B_1 < \ldots < B_k$ such that $B_i$ is represented by $e_i$ and $|B_i| = \Weight(e_ie_{i+1})$, for $1 \leq i \leq k$.  By definition, $\Weight(P) = \sum_{i=1}^{k} |B_i| = |W|$.  We record the previous discussion for later.

\begin{figure}
 \centering
 \begin{tikzpicture}[xscale=0.7,yscale=2]
 \foreach \i in {3,...,12}{%
    \node [Vertex] (\i) at (\i,0) [label=below:$\i$]{};%
    \node (anchor) at (3,-1){};
    \node (anchor2) at (3,1){};
 }%
 \foreach \i/\t/\b in {3/5/30,4/6/30,5/7/30,6/9/30,7/9/20,8/10/30,9/11/30,10/12/30,11/12/0}{%
   \draw [thick,->,bend left=\b] (\i) to (\t);%
 }%
\end{tikzpicture}
\begin{tikzpicture}[xscale=2,yscale=2]
  \foreach \i/\j/\x/\y\l in {0/1/0/1/01,3/4/0/-1/34,4/5/1/1/45,5/6/1.25/-0.6/56,6/7/4/1/67,6/8/3.5/-.2/68,7/8/2.5/-.8/78,8/9/1.25/.6/89,9/10/4/-1/9|10,10/11/0/0/10|11,13/14/2.5/0/13|14,t//3.5/0.2/t} {
    \node (\i\j) at (\x,\y) {$\l$};%
  }

  \draw [ultra thick,->] (01) to node [label={[above,label distance=-4pt]:3}] {} (45);
  \draw [thick,->] (34) to node [label={[below,label distance=-4pt]:3}] {} (910);
  \draw [thick,->] (34) to node [label={[left,label distance=-4pt]:4,5}] {} (1011);
  \draw [thick,->,bend right=25] (34) to node [pos=.6,label={[right,label distance=-7pt]:7}] {} (1314);
  \draw [ultra thick,->] (45) to node [label={[left,label distance=-4pt]:3,4}] {} (1011);
  \draw [ultra thick,->,bend left] (45) to node [label={[above,label distance=-4pt]:6}] {} (1314);
  \draw [thick,->] (56) to node [pos=.3,label={[above,label distance=-5pt]:3}] {} (1011);
  \draw [thick,->] (56) to node [pos=.3,label={[above,label distance=-5pt]:6}] {} (1314);
  \draw [thick,->,bend right=20] (67) to node [label={[above,label distance=-4pt]:5}] {} (1314);
  \draw [thick,->] (68) to node [pos=.3,label={[below,label distance=-2pt]:5}] {} (1314);
  \draw [thick,->] (78) to node [pos=.4,label={[right,label distance=-6pt]:5}] {} (1314);
  \draw [thick,->] (89) to node [label={[above,label distance=-4pt]:4,5}] {} (1314);
  \draw [thick,->] (910) to node [label={[below,label distance=-5pt]:3,4}] {} (1314);
  \draw [ultra thick,->] (1011) to node [label={[above,label distance=-4pt]:3}] {} (1314);
  \draw [ultra thick,->] (1314) to node [label={[above,label distance=-4pt]:3}] {} (t);
\end{tikzpicture}
\caption{Left: a straight graph $G$.  Right: $D(G \cup S \cup T)$ for $S = \{0,1,2\}$ and $T = \{13,14,15\}$, where bold edges belong to paths of minimum weight from the source $01$ to the sink $t$.  The three fully expansive sets encoded by $D(G)$ are $S\cup T\cup\{4,5,6,10,11,12\}$, $S \cup T \cup \{4,5,6,7,10,11,12\}$, and $S \cup T \cup \{4,5,6,9,10,11\}$.}\label{fig:D}
\end{figure}
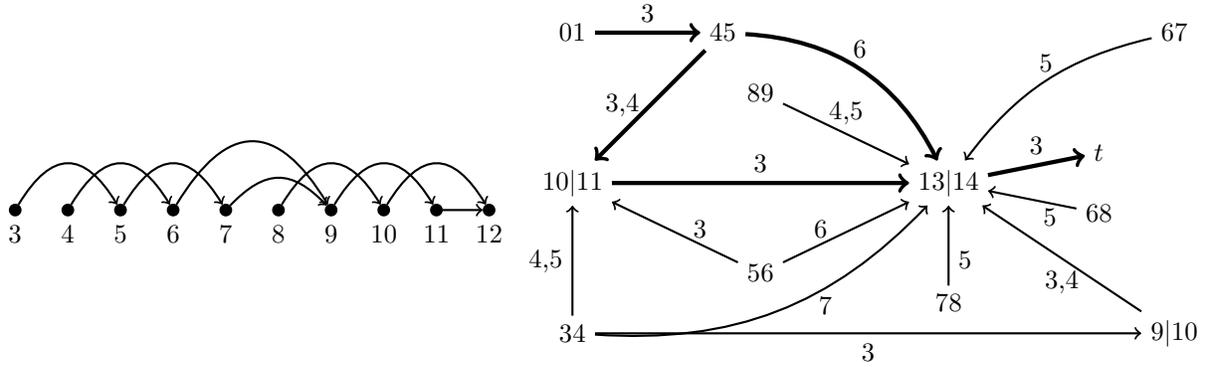

\begin{theorem}\label{thm:D(G)}
 If $G$ is a straight graph, then $D(G)$ is an acyclic digraph that has $O(m)$ vertices and $O(nm)$ edges.  Furthermore, $W \subseteq V(G)$ is (resp.\ fully) expansive if and only if\/ $W$ is encoded by a path of $D(G)$ (resp.\ from the source to the sink) whose weight is $|W|$.
\end{theorem}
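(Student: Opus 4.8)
I would separate the structural claims from the encoding equivalence. For the structural part, $V(D(G))=E(G)\cup\{t\}$ has $m+1=O(m)$ vertices; and since an expansive connected set has between $3$ and $n$ vertices and $w_1w_2$ represents at most one such set of each size, $e$ has at most one out-edge of each weight $j\in\{3,\dots,n\}$ in $D$, so $d^+_{D(G)}(e)=O(n)$ and $D(G)$ has $O(nm)$ edges. For acyclicity I would check that every edge $e\to g$ of $D$ strictly increases, in $<_G$, the smaller endpoint of the edge: if $g=vw$ is the $j$-extension of $e=w_1w_2$, then with $u=\Fr(\Ur(w_{j-1}))$ and $z=\Fr(\Ur(w_j))$ as in \eqref{eq:extends}, the identity $\Ur=\RR\circ\Fr$ together with $\Fr(x)\ge x$ and $w_{j-1}\ge w_2$, $w_j\ge w_3$ forces each of the possible values $\LL(\LL(u))$, $\LL(u)$, $u$ of $v$ to satisfy $v>w_1$; as $t$ has out-degree $0$, $D(G)$ is acyclic.

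For ``path $\Rightarrow$ (fully) expansive'', let $P=e_1,\dots,e_{k+1}$ be a path of $D(G)$ and put $j_i=\Weight(e_ie_{i+1})$. Since $e_i\to e_{i+1}$ is an edge of $D(G)$, either $e_{i+1}=t$ (forcing $i=k$, $e_k$ the pre-sink edge and $j_k=3$) or $e_{i+1}$ is the $j_i$-extension of $e_i$; in both cases $e_i$ represents an expansive connected set $B_i$ with $|B_i|=j_i$ (the pre-sink block in the first case). Because ``$B'$ extends $B$'' depends only on the edge representing $B'$, \cref{def:expansive extension} holds for $B_1<\dots<B_k$, while \cref{def:block expansive} holds by construction; it then remains to see that $B_1,\dots,B_k$ are exactly the blocks of $W:=B_1\cup\dots\cup B_k$. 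I would obtain this from \eqref{eq:extends} by checking that the first vertex of $B_{i+1}$ strictly follows $\Fr(\max B_i)$, so that no edge of $G$ joins $B_i$ to $B_{i+1}$; then $W$ is expansive and $\Weight(P)=\sum_i|B_i|=|W|$ as the blocks partition $W$. If moreover $P$ runs from the source to the sink, then $e_1$ is the source edge and $e_{k+1}=t$, so $B_1$ and $B_k$ are the source and pre-sink blocks and $W$ is fully expansive.

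For the converse, let $W$ be expansive with blocks $B_1<\dots<B_k$ and let $e_i$ be the edge of $G$ formed by the first two vertices of $B_i$, so $e_i$ represents $B_i$. By \cref{def:expansive extension}, $e_{i+1}$ is the $|B_i|$-extension of $e_i$ for $1\le i<k$, giving an edge $e_i\to e_{i+1}$ of weight $|B_i|$ in $D$. For the last block, if $B_k$ is the pre-sink block --- in particular whenever $W$ is fully expansive --- then $e_k$ is the pre-sink edge and I set $e_{k+1}=t$, with $\Weight(e_kt)=3=|B_k|$; otherwise I would argue that the $|B_k|$-extension of $e_k$ exists and take $e_{k+1}$ to be it. Either way, $P=e_1,\dots,e_{k+1}$ is a path of $D(G)$ that encodes $W$ by construction, with $\Weight(P)=\sum_i|B_i|=|W|$; and if $W$ is fully expansive, $e_1$ is the source of $D(G)$ and $e_{k+1}=t$ its sink.

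The heart of the argument, where \eqref{eq:extends} must be unfolded case by case, is the pair of facts used above: (i) the extension of an expansive block is never adjacent to it, and (ii) every expansive connected set other than the pre-sink block has a well-defined $j$-extension that is a genuine edge of $G$ representing an expansive connected set. Both come down to verifying, in each of the four cases of \eqref{eq:extends}, inequalities relating $u$, $z$, $\Fr$, $\Ur=\RR\circ\Fr$, $\LL$ and $\RR$. I expect (ii) to be the main obstacle; here the hypothesis that the pre-sink block $T$ is a block of $V(G)$ is essential, since it fixes the values of $\Fr$ near $v_n$ and thereby ensures that an expansive block approaching the end of the ordering is extended precisely onto the pre-sink edge, so that the path can always be closed off at $t$.
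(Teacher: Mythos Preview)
The paper does not give a separate proof of this theorem; it records it as a summary of the discussion immediately preceding the statement, with essentially two one-line justifications: ``By definition, $B<B'$ when $B'$ extends $B$, thus $D(G)$ is acyclic'', and ``any path $P=e_1,\ldots,e_{k+1}$ of $D(G)$ encodes an expansive set $W$ with blocks $B_1<\ldots<B_k$ \ldots\ by definition, $\Weight(P)=\sum|B_i|=|W|$''. Your plan is considerably more detailed than this and is correct in outline.

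One simplification worth noting: the paper's asserted inequality $B<B'$ (every vertex of $B$ precedes every vertex of $B'$) is stronger than your ``the smaller endpoint strictly increases'', and it kills two birds with one stone---it yields acyclicity \emph{and} your fact~(i) that consecutive blocks are non-adjacent (hence genuinely blocks of $W$). Proving $B<B'$ directly is no harder than what you sketch: from \eqref{eq:extends} and $\Ur=\RR\circ\Fr$ one checks that in every case the first vertex $v$ of $B'$ satisfies $v>\Fr(w_j)$, which is exactly $B<B'$ together with non-adjacency. So you may as well aim for that stronger conclusion and skip the separate verification of~(i).

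Your concern about (ii)---that for a merely expansive (not fully expansive) $W$ the $|B_k|$-extension of the last edge $e_k$ must exist so that the encoding path has a terminal vertex $e_{k+1}$---is legitimate and is precisely the point the paper glosses over. Under the standing hypotheses ($d(v)\ge 2$ for every $v$, and $S$, $T$ are blocks of $V(G)$) one can indeed run through the four cases of \eqref{eq:extends} and check that the resulting $vw$ is always an edge of $G$ with $d^+(v)\ge 2$; the pre-sink block forces $\Fr$ to collapse onto $v_{n-2}v_{n-1}$ near the right end, as you anticipated. This is the only place where real case-work is needed, and your plan locates it correctly.
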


The key feature about fully expansive sets is that each of them is a $2$-dom, while at least one of them is a minimum $2$-dom.  Of course, this claim holds only under our assumption that $G$ has at least one $2$-dom.

\begin{theorem}\label{thm:expansive is 2-dom}
 If\/ $W \subseteq V(G)$ is a fully expansive set of a straight graph $G$, then $W$ is a $2$-dom.
\end{theorem}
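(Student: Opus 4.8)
\emph{Proof plan.} The plan is to verify directly that every vertex $p$ of $G$ has at least two neighbors in $W$. Let $B_1<\ldots<B_k$ be the blocks of $W$; since $W$ is fully expansive, $B_1=S$ begins at $v_1$, $B_k=T$ ends at $v_n$, each $B_i$ is expansive, and $B_{i+1}$ extends $B_i$ for $1\le i<k$. Because $B_i<B_{i+1}$ implies $\max B_i<\min B_{i+1}$, the \emph{spans} of the blocks (the sets of vertices $q$ with $\min B_i\le q\le\max B_i$) are pairwise disjoint and increasing, the first begins at $v_1$, and the last ends at $v_n$; hence every vertex $p$ either lies in the span of exactly one $B_i$, or satisfies $\max B_i<p<\min B_{i+1}$ for exactly one $i$. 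I will handle these two situations in turn, after recording two elementary consequences of the definition of a straight graph: \emph{(i)} $\Fr$ is non-decreasing and, for $a<b$, $ab\in E(G)$ if and only if $b\le\Fr(a)$; and \emph{(ii)} if $a<b<c$ and $c\le\Fr(a)$, then $b$ is adjacent to $a$ (since $b<c\le\Fr(a)$) and to $c$ (since $c\le\Fr(a)\le\Fr(b)$ by \emph{(i)}).

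First I would show that every vertex $p$ in the span of an expansive block $B_i=\{w_1<\ldots<w_j\}$ has two neighbors in $B_i$. If $p\in B_i\setminus\{w_1,w_j\}$, then the predecessor and successor of $p$ in $B_i$ are neighbors of $p$, since $B_i$ is connected; the vertex $w_1$ is adjacent to $w_2$ and to $w_3=\Fr(w_1)$, and $w_j$ to $w_{j-1}$ and to $w_{j-2}$ because $w_j=\Fr(w_{j-2})$. If $p\notin B_i$, then by \cref{def:expansive fr,def:expansive > 4} the span of $B_i$ is covered by the sets of vertices lying (inclusively) between $w_1$ and $w_3$, between $w_\ell$ and $w_{\ell+1}$ for $3\le\ell\le j-3$, and between $w_{j-2}$ and $w_j$; each such set is of the form ``between $a$ and $\Fr(a)$'' with $a,\Fr(a)\in B_i$, and every endpoint that occurs is a vertex of $B_i$, so $p$ lies strictly between the endpoints of one of these sets and is adjacent to both by \emph{(ii)}. (For $j\in\{3,4\}$ this list collapses, but the reasoning is unchanged.)

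Second --- and this is the crux --- I would show that a vertex $p$ with $\max B_i<p<\min B_{i+1}$ has two neighbors in $B_i\cup B_{i+1}$. Write $B_i=\{w_1<\ldots<w_j\}$, let $vw$ be the $j$-extension of $w_1w_2$, so that $v=\min B_{i+1}$ and $w$ is the second smallest vertex of $B_{i+1}$, and recall $u=\Fr(\Ur(w_{j-1}))$ and $z=\Fr(\Ur(w_j))$. Since $\Fr$ and $\RR$ are non-decreasing and $w_{j-1}<w_j$, we get $\Ur(w_{j-1})\le\Ur(w_j)$ and hence $u\le z$; moreover, a short inspection of the four branches of \eqref{eq:extends} shows that $v\le u$ and $w\le z$ always hold. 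Recalling that $\Ur(w_{j-1})=\RR(\Fr(w_{j-1}))$ and $\Ur(w_j)=\RR(\Fr(w_j))$, exactly one of three cases occurs. If $p<\Ur(w_{j-1})$, i.e.\ $p\le\Fr(w_{j-1})$, then $p$ is adjacent to both $w_{j-1}$ and $w_j$. If $\Ur(w_{j-1})\le p<\Ur(w_j)$, then $p\le\Fr(w_j)$ makes $w_j$ a neighbor, while $v\le u=\Fr(\Ur(w_{j-1}))\le\Fr(p)$ and $p<v$ make $v\in B_{i+1}$ a neighbor by \emph{(i)}. If $\Ur(w_j)\le p$, then $p>\Fr(w_j)$, so $p$ has no neighbor in $B_i$, but $\Fr(p)\ge\Fr(\Ur(w_j))=z\ge w>v>p$, so $p$ is adjacent to both $v$ and $w$ by \emph{(i)}. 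In every case $p$ has two neighbors in $W$; together with the second paragraph and the facts $\min B_1=v_1$ and $\max B_k=v_n$, this proves that $W$ is a $2$-dom.

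I expect the third paragraph to be the main obstacle: one must correctly line up each branch of the piecewise rule \eqref{eq:extends} with the three-way split on $p$, and also dispose of the degenerate possibilities --- an empty gap between $B_i$ and $B_{i+1}$, the equality $u=z$, an empty ``middle'' or ``far'' range for $p$, a small block $B_i$ --- none of which should threaten the argument but all of which must be accounted for. The span decomposition in the second paragraph is routine, but it too needs the $j\le 4$ bookkeeping spelled out explicitly.
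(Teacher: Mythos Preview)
Your argument is correct and mirrors the paper's proof closely. The paper also reduces to the same three-way split in the gap between consecutive blocks---comparing the position of the vertex against $\Ur(w_{j-1})$ and $\Ur(w_j)$ and using $v\le u$, $w\le z$ from \cref{eq:extends}---while your span analysis corresponds to its within-block cases; the only difference is organizational (you partition vertices by block spans and gaps, the paper by consecutive elements of $W$), not mathematical.
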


\begin{proof}
 Suppose $W = w_1 < \ldots < w_k$ and let $w_0 < w_1$ and $w_{k+1} > w_k$ be the artificial vertices of $G$.  Then, every vertex $v \in V(G)$ belongs to $G(w_i, w_j)$ for some $0 \leq i < j \leq k+1$.  Take $i$ and $j$ so that $j-i$ is minimum, and consider the following cases for $v$.
 \begin{discription}[label={\textbf{Case \arabic*:}}]
  \item $i = 0$.  This case is impossible, as $w_1w_2$ is the source edge of $G$ and, consequently, $G(w_0, w_1) = \emptyset$.
  \item $j = k+1$.  In this case, $v$ is adjacent to $w_{k-1}$ and $w_k$ because $w_{k-1}w_k$ is the pre-sink edge of $G$.
  \item $w_i$ and $w_j$ belong to the same block of $W$.  Then, either $v = w_{i+1}$ and $j = i+2$ or $j = i+1$.  Whichever the case, $v$ is adjacent to $w_i$ and $w_j$.
  \item $v = w_{i+1}$ is the first of its block.  Then, $v$ is adjacent to $w_{i+2}$ and $w_{i+3} = \Fr(w_{i+1})$.
  \item $j = i+1$ and $w_{i}$ and $w_{i+1}$ belong to different blocks.  By \cref{eq:extends}, $w_{i+1} \leq \Fr(\Ur(w_{i-1}))$ and $w_{i+2} \leq \Fr(\Ur(w_i))$.  Thus, either $v < \Ur(w_{i-1})$ is adjacent to $w_{i-1}$ and $w_i$ or $v \in G(\Fr(w_{i-1}), \Ur(w_i))$ is adjacent to $w_i$ and $w_{i+1}$ or $v \in G(\Fr(w_i), w_{i+1})$ is adjacent to $w_{i+1}$ and $w_{i+2}$. 
 \end{discription}
 As $v$ is $2$-dominated by $W$ in every case, it follows that $W$ is a $2$-dom.
\end{proof}

\begin{theorem}\label{thm:minimum 2-dom is expansive}
 If a straight graph $G$ has a $2$-dom, then it has a minimum $2$-dom that is fully expansive.
\end{theorem}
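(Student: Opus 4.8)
The plan is an exchange argument: take a minimum $2$-dom that is lexicographically extremal and show it is automatically fully expansive.

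I begin with some preliminaries. If $B$ is a block of a $2$-dom $W$ and $w$ is its leftmost vertex, then every neighbor of $w$ in $W$ is adjacent to $w$ and so lies in $B$; as $w$ has at least two of them, $|B|\ge 3$, and more generally every vertex of $B$ has at least two neighbors inside $B$ --- call this property $(\star)$. Every expansive set satisfies $(\star)$ too (this is what the proof of \Cref{thm:expansive is 2-dom} checks locally). Also, since $S$ and $T$ are blocks of $V(G)$, the block of any $2$-dom containing $v_1$ is $S$ and the one containing $v_n$ is $T$, so a $2$-dom is fully expansive precisely when all of its blocks are expansive and every block but the first extends the preceding one. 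Finally, two monotonicity facts, both immediate from the fact that $\Fr$, $\RR$, $\LL$, $\Ur$ are non-decreasing along $<$: (M1) if $e\le e'$ are edges each representing an expansive set of size $j$, then the expansive set represented by $e'$ is pointwise $\ge$ the one represented by $e$; and (M2) if $e$ represents an expansive set $B^{\star}$ of size $j$, and $B$ is a connected set of size $j$ with property $(\star)$ whose first two vertices form $e$, then $B^{\star}$ is pointwise $\ge B$, with $B^{\star}\ne B$ unless $B$ is already expansive --- one proves (M2) by comparing $t$-th vertices in order of increasing $t$, using $(\star)$ to supply the adjacencies that let the defining choices of an expansive set reach at least as far as whatever $B$ did.

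Now fix a minimum $2$-dom $W$ whose list of vertex positions, in increasing order, is lexicographically largest among all minimum $2$-doms (well defined, since there are finitely many $2$-doms and all minimum ones have the same size). Let $B_1<\dots<B_k$ be its blocks; $B_1=S$ is expansive. Suppose $W$ is not fully expansive and let $i$ be least such that $B_i$ is not expansive, or $i<k$ and $B_{i+1}$ does not extend $B_i$. Let $B_\ell$ be the offending block ($B_i$ or $B_{i+1}$ accordingly), and let $B^{\star}$ be the expansive set of size $|B_\ell|$ that either has the same first two vertices as $B_\ell$ (first case) or is represented by the $|B_i|$-extension of the edge representing $B_i$ (second case). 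Let $W'$ be $W$ with $B_\ell$ replaced by $B^{\star}$ (merging $B^{\star}$ with the next block if they now touch). Then $|W'|=|W|$, and I claim $W'$ is a $2$-dom whose position list is strictly lexicographically larger than $W$'s --- contradicting the choice of $W$ and hence showing $W$ is fully expansive. The position list strictly increases because $B^{\star}$ is pointwise $\ge B_\ell$ and differs from it in some coordinate: in the first case this is (M2); in the second case, $B_\ell=B_{i+1}$, being part of a $2$-dom, $2$-dominates every vertex between $B_i$ and $B_{i+1}$, so the edge representing it lies no further right than the $|B_i|$-extension (the rightmost edge with that property, by \eqref{eq:extends}), whence the comparison follows from (M1) and (M2), and it is strict because $B_{i+1}$ is not placed as a fully expansive set would place it. And $W'$ is a $2$-dom: vertices left of $B_\ell$ keep their dominators (the first two vertices of $B^{\star}$ coincide with, or lie no further left than, those of $B_\ell$); the vertices in the span of $B^{\star}$ are $2$-dominated by $(\star)$, as in the proof of \Cref{thm:expansive is 2-dom}; the vertices between $B_i$ and $B^{\star}$ are $2$-dominated by the very definition of the $|B_i|$-extension; and since $B^{\star}$ reaches at least as far right as $B_\ell$, nothing to its right loses a dominator.

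The real work, and the main obstacle, is the analysis of \eqref{eq:extends} underlying the two facts used above: that the edge representing $B_{i+1}$ lies no further right than the $|B_i|$-extension, and that this extension is a gap-covering continuation. Writing $w_{j-1},w_j$ for the last two vertices of the expansive block $B_i$ and setting $u=\Fr(\Ur(w_{j-1}))$, $z=\Fr(\Ur(w_j))$, one must verify, in each of the four cases of \eqref{eq:extends} (distinguished by whether $u=z$, whether $\Fr(u)=z$, and by the out-degrees of $\LL(u)$ and $u$), that the prescribed edge $vw$ is exactly the rightmost edge for which $B_i$ together with the expansive set represented by $vw$ leaves no vertex strictly between $w_j$ and $v$ un-$2$-dominated; the small sizes $j\in\{3,4\}$ (where the second clause of expansiveness is vacuous) need separate treatment from $j\ge 5$. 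Some additional care is needed where the offending block borders the fixed pre-sink block $T$, which cannot be pushed rightward. Each individual comparison reduces to monotonicity of $\Fr$, $\RR$, $\Ur$; assembling them into the exact case split of \eqref{eq:extends} is where the difficulty lies.
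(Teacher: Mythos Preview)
Your exchange argument is the same idea as the paper's, with two cosmetic differences: the paper maximizes the \emph{sum} $\Pos(W)=\sum_{w\in W}\Pos(w)$ rather than the lexicographic order of the sorted position list, and it swaps one vertex at a time rather than an entire block. The sum criterion is slightly cleaner here (any pointwise-larger replacement immediately raises $\Pos$, with no need to argue about merging or about sorted lists), and single-vertex swaps sidestep the question of whether your target expansive block $B^{\star}$ actually exists as a set of $|B_\ell|$ distinct vertices---a point your (M2) assumes rather than proves. Both routes bottom out in the case analysis of \eqref{eq:extends} that you correctly flag as the crux and then defer; the paper carries it out explicitly in five short cases, in each one exhibiting a concrete replacement of two or three vertices that yields a $2$-dom with strictly larger $\Pos$.
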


\begin{proof}
 Suppose $v_1 < \ldots < v_n$ are the vertices of $G$, and let $\Pos(v_i) = i$, $1 \leq i \leq n$.  For $W \subseteq V$, let $\Pos(W) = \sum_{w\in W}{\Pos(w)}$.  We shall prove that every minimum $2$-dom $W$ with maximum $\Pos$ is fully expansive.

 Consider any block $B$ of $W$ with vertices $w_1 < \ldots < w_j$ and let $u_3 = \Fr(w_1)$, $u_i = \Fr(w_{j-1})$ for $4 \leq i \leq j-2$, $u_j = \Fr(w_{j-2})$, and $u_{j-1} = \LL(w_{j})$ if $j \geq 5$.  Following the same pattern as in \Cref{thm:expansive is 2-dom}, it is not hard to see that $W_i = (W \setminus \{w_i\}) \cup \{u_i\}$, $3 \leq i \leq j$, is a $2$-dom of $G$.   Moreover, since $|W| \leq |W_i|$ it follows that $u_i \not\in W \setminus \{w_i\}$, hence $\Pos(W_i) = \Pos(W) + \Pos(u_i) - \Pos(w_i) \geq \Pos(W)$.  Consequently, $u_i = w_i$ by the maximality of $\Pos(W)$.  That is, $B$ satisfies \Cref{def:expansive fr,def:expansive > 4} and, thus, $W$ satisfies \Cref{def:block expansive}.
 
 Suppose now that $w_j$ is not the maximum of $W$.  Then, some expansive block $B'$ represented by an edge $vw$ appears immediately after $B$ in $W$.  Let $u = \Fr(\Ur(w_{j-1}))$ and $z = \Fr(\Ur(w_j))$, and note that $\Ur(w_j) \leq v$ because $B \cup B'$ is not connected.  Since $\Ur(w_{j-1})$ has at most one neighbor in $B$ and $\Ur(w_j)$ has no neighbors in $B$, it follows that (a) $v \leq u$ and (b) $w \leq z$.  Moreover, (c) $\Fr(v) \in B'$ by \Cref{def:expansive fr}.  Suppose that (d) $B'$ does not extend $B$, and consider the following cases.
 \begin{discription}[label={\textbf{Case \arabic*:}}]
  \item $u = z$ and $d^+(\LL(u)) = 1$.  Since $d^+(v) \geq 2$, then $v \neq \LL(u)$.  Then, by (a)~and~(b), it follows that $v \leq \LL(\LL(u))$.  Moreover, as $v$ has at least two neighbors in $W$, it follows that $w \leq \LL(u)$.  Note that $W \setminus \{v, w\} \cup \{\LL(\LL(u)), \LL(u)\}$ is a $2$-dom by (c) that, by (d) and \cref{eq:extends}, has $\Pos > \Pos(W)$.
  \item $u = z$ and $d^+(\LL(u)) > 1$.  In this case, $v \leq \LL(u)$ by (a)~and~(b), while (c) implies that $W \setminus \{v, w, \Fr(v)\} \cup \{\LL(u), u, \Fr(\LL(u))\}$ is a $2$-dom that, by (d) and \cref{eq:extends}, has $\Pos > \Pos(W)$.
  \item $u \neq z$ and $d^+(u) = 1$.  Since $d^+(v) \geq 2$, then $v \neq u$.  Then $v \leq \LL(u)$ by (a), while $w \leq u$ because $v$ has at least two neighbors in $W$.  Then, $(W \setminus \{v,w\}) \cup \{\LL(u), u\}$ is a $2$-dom by (c) that has $\Pos > \Pos(W)$ by (d) and \Cref{eq:extends}.
  \item $\Fr(u) = z$ and $d^+(u) > 1$.  Since $v$ has at least two neighbors in $W$, (b) implies $w \leq \LL(z)$.  Then, $(W \setminus \{v,w\}) \cup \{u,\LL(z)\}$ is a $2$-dom by (c) that has $\Pos > \Pos(W)$ by (d) and \Cref{eq:extends}. 
  \item $u < z < \Fr(u)$.  In this final case, $W \setminus \{v, w, \Fr(v)\} \cup \{u, z, \Fr(u)\}$ is a $2$-dom by (a)--(c) that, by (d) and \Cref{eq:extends}, has $\Pos > \Pos(W)$.
 \end{discription}
 As all the cases are impossible, $B'$ extends $B$.  Hence, \ref{def:expansive extension} holds as well.
\end{proof}

\Cref{thm:D(G),thm:minimum 2-dom is expansive,thm:expansive is 2-dom} imply that a minimum $2$-dom can be obtained by computing a path of minimum weight from the source of $D(G)$ to its sink.  By~\Cref{thm:D(G)}, this algorithm requires $O(nm)$ time once $D(G)$ is given.  We remark that $D(G)$ can be generated in $O(nm)$ time, although the details are omitted as they are similar to those discussed in the next section for $R(G)$.

\section{Computing a minimum \texorpdfstring{$2$}{2}-dom in \texorpdfstring{$O(m)$}{O(m)} time}
\label{sec:m algorithm}

The idea to accelerate the algorithm is to compress $D(G)$ in a reduced graph $R(G)$ that uses two vertices per edge of $G$.  For the sake of notation, let $\ER(G) = \ES(G) \cup \EB(G)$ for $\ES(G) = \{\SmE{e} \mid e \in E(G)\}$ and $\EB(G) = \{\BigE{e} \mid e \in E(G)\}$.  Define the \emph{width} of $vw \in E(G)$ as the minimum $\Width \geq 1$ such that $d^+(\Fr^\Width(v)) \geq 2$; when no such $\Width$ exists, the width of $vw$ is $\Width = \infty$.  

Let $s$ and $e$ be the source and pre-sink of $G$, respectively, and $t \not\in E(G)$.  As $D(G)$, the digraph $R(G)$ is obtained by inserting an edge $\SmE{e}t$ of weight $\Weight(\SmE{e}t) = 3$ in a digraph $R$ that, this time, has vertex set $\ER(G)$.  The vertices $\SmE{s}$ and $t$ are the \emph{source} and \emph{sink} of $R(G)$.  For each $e \in E(G)$ and $j \in \{3,4\}$, $R$ has \emph{regular} edges $\SmE{e}\SmE{g}$ and $\SmE{e}\BigE{g}$ of weight $j$ for each $j$-extension $g$ of $e$.  Similarly, if $e$ has width $\Width < \infty$, then $R$ has \emph{regular} edges $\BigE{e}\SmE{g}$ and $\BigE{e}\BigE{g}$ of weight $\Width+4$ when $e$ has a $(\Width+4)$-extension $g$.  This time, however, the $j$-extensions of $e$ for $j > \Width+4$ are compacted in a single edge.  Specifically, if $e = vw$, $\Fr(v) \neq w$, and $z = \Fr^{\Width}(v)$, then $R$ has a \emph{compact} edge $\BigE{e}\BigE{g}$ of weight $\Width$ for $g = z\RR(z)$.  \Cref{fig:R} depicts $R(G)$ for the straight graph $G$ in \Cref{fig:D}.

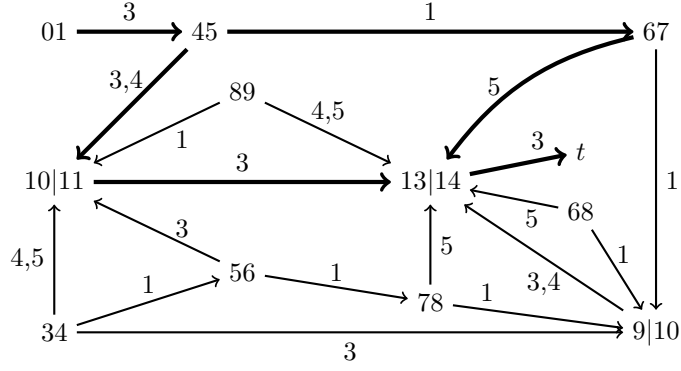
\begin{figure}
 \centering
\begin{tikzpicture}[xscale=2,yscale=2]
  \foreach \i/\j/\x/\y\l in {0/1/0/1/01,3/4/0/-1/34,4/5/1/1/45,5/6/1.25/-0.6/56,6/7/4/1/67,6/8/3.5/-.2/68,7/8/2.5/-.8/78,8/9/1.25/.6/89,9/10/4/-1/9|10,10/11/0/0/10|11,13/14/2.5/0/13|14,t//3.5/0.2/t} {
    \node (\i\j) at (\x,\y) {$\l$};%
  }

  \draw [ultra thick,->] (01) to node [label={[above,label distance=-4pt]:3}] {} (45);
  \draw [thick,->] (34) to node [label={[below,label distance=-4pt]:3}] {} (910);
  \draw [thick,->] (34) to node [label={[left,label distance=-4pt]:4,5}] {} (1011);
  \draw [thick,->] (34) to node [label={[above,label distance=-4pt]:1}] {} (56);
  \draw [ultra thick,->] (45) to node [pos=.3,label={[left,label distance=-4pt]:3,4}] {} (1011);
  \draw [ultra thick,->] (45) to node [label={[above,label distance=-4pt]:1}] {} (67);
  \draw [thick,->] (56) to node [pos=.3,label={[above,label distance=-5pt]:3}] {} (1011);
  \draw [thick,->] (56) to node [label={[above,label distance=-5pt]:1}] {} (78);
  \draw [ultra thick,->,bend right=20] (67) to node [pos=.7,label={[above,label distance=-4pt]:5}] {} (1314);
  \draw [thick,->] (67) to node [label={[right,label distance=-4pt]:1}] {} (910);
  \draw [thick,->] (68) to node [pos=.3,label={[below,label distance=-3pt]:5}] {} (1314);
  \draw [thick,->] (68) to node [pos=.3,label={[right,label distance=-4pt]:1}] {} (910);
  \draw [thick,->] (78) to node [label={[right,label distance=-4pt]:5}] {} (1314);
  \draw [thick,->] (78) to node [pos=.2,label={[above,label distance=-4pt]:1}] {} (910);
  \draw [thick,->] (89) to node [label={[above,label distance=-4pt]:4,5}] {} (1314);
  \draw [thick,->] (89) to node [pos=.3,label={[below,label distance=-4pt]:1}] {} (1011);
  \draw [thick,->] (910) to node [label={[below,label distance=-6pt]:3,4}] {} (1314);
  \draw [ultra thick,->] (1011) to node [label={[above,label distance=-4pt]:3}] {} (1314);
  \draw [ultra thick,->] (1314) to node [pos=.7,label={[above,label distance=-4pt]:3}] {} (t);
\end{tikzpicture}
\caption{$R(G \cup S \cup T)$ for $G$ in \Cref{fig:D}. For simplicity, $\protect\SmE{e}$ and $\BigE{e}$ are depicted as one vertex $e$ for every $e \in E(G)$.  The edges of weight $3$ and $4$ are from $\protect\SmE{e}$ and those of weight $1$ and $5$ are from $\BigE{e}$.  Compact edges correspond to those of weight $1$ and go to $\BigE{e}$.  Again, bold edges belong to paths of minimum weight from the source $\protect\SmE{01}$ to the sink $t$.  The three fully expansive sets encoded by $R(G)$ are the same as those in \Cref{fig:D}.}\label{fig:R}
\end{figure}

The main feature of $R(G)$ is that it preserves the adjacencies and distances of $D(G)$.  To make this assertion explicit, say that a path $P = \ReE{e}_1, \ldots, \ReE{e}_{h+1}$ of $R(G)$ is a \emph{$D$-path} when $\ReE{e}_h\ReE{e}_{h+1}$ is regular, while it is a \emph{$D$-edge} when it is a $D$-path and $\ReE{e}_i\ReE{e}_{i+1}$ is compact for $1 \leq i < h$.  Clearly, any $D$-path $P$ is equal to $P_1, \ldots, P_k$, where $P_i$ is a $D$-edge from $\ReE{e}_i$ to $\ReE{e}_{i+1}$, for $1 \leq i \leq k$.  By definition, $\ReE{e}_i \in \{\SmE{e}_i, \BigE{e}_i\}$ for some $e_i \in E(G)$.  Following the terminology for $D(G)$, we say that $P$ \emph{encodes} the expansive set $W$ with blocks $B_1 < \ldots < B_k$ such that $B_i$ is represented by $e_i$ and $|B_i| = \Weight(P_i)$, for $1 \leq i \leq k$.  \Cref{thm:R(G)} below is the translation of \Cref{thm:D(G)} to $R(G)$, that shows that $R(G)$ is actually a compact version of $D(G)$.

\begin{theorem}\label{thm:distance R(G)}
 Let $G$ be a straight graph, $e,g \in E(G)$, $\ReE{g} \in \{\SmE{g}, \BigE{g}\}$, and $j \in \Naturals$.  If $j < 5$, let $\ReE{e} = \SmE{e}$; otherwise, let $\ReE{e} = \BigE{e}$.  Then, $g$ is the $j$-extension of $e$ if and only if there exists a $D$-edge from $\ReE{e}$ to $\ReE{g}$ of weight $j$ in $R(G)$.
\end{theorem}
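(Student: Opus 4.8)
The plan is to prove the biconditional by strong induction on $j$, with the case split $j<5$ versus $j\ge 5$ mirroring the definition of $\ReE{e}$. Suppose first that $j<5$, so $\ReE{e}=\SmE{e}$. Since compact edges of $R(G)$ leave only vertices of $\EB(G)$, and in a $D$-edge every edge but the last is compact, a $D$-edge starting at $\SmE{e}$ must consist of a single (necessarily regular) edge; conversely every regular edge out of $\SmE{e}$ is by itself a $D$-edge. By the definition of the regular edges out of $\SmE{e}$, a $D$-edge of weight $j$ from $\SmE{e}$ to $\ReE{g}$ therefore exists if and only if $j\in\{3,4\}$ and $g$ is the $j$-extension of $e$; and for $j\le 2$ neither side holds, so the equivalence is vacuous. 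This settles $j<5$, and along the way it shows that $D$-edges out of $\SmE{e}$ have weight in $\{3,4\}$, whereas, since a regular edge out of $\BigE{e}$ has weight $\Width(e)+4\ge 5$, every $D$-edge out of $\BigE{e}$ has weight at least $5$; hence the case split on $\ReE{e}$ in the statement is forced, and from here on $j\ge 5$ and $\ReE{e}=\BigE{e}$.

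Two preliminary observations pin down the endpoints of the recursion. If $g$ is the $j$-extension of $e=vw$ with $j\ge 5$, then $\Fr(v)\neq w$ and $\Width(e)\le j-4<\infty$: writing $B=w_1<\dots<w_j$ for the expansive connected set represented by $e$, \cref{def:expansive fr} gives $w_3=\Fr(w_1)=\Fr(v)$, so $w=w_2<\Fr(v)$, while $w_{j-2}=\Fr^{j-4}(v)$, $w_j=\Fr(w_{j-2})$, $w_{j-1}=\LL(w_j)$ and $w_{j-2}<w_{j-1}$ force $d^+(\Fr^{j-4}(v))\ge 2$, i.e.\ $\Width(e)\le j-4$. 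Also, if $\BigE{e}$ has any out-edge in $R(G)$ then $\Width(e)<\infty$, since both the regular and the compact edges out of $\BigE{e}$ require $e$ to have finite width; so in either direction of the biconditional we may assume $\Width:=\Width(e)<\infty$. The heart of the argument is then the following reduction lemma, which I would prove first: with $e=vw$, $\Fr(v)\neq w$, $\Width<\infty$, $z:=\Fr^{\Width}(v)$ and $e':=z\RR(z)$ (an edge of $G$, as $d^+(z)\ge 2$), one has for every $j\ge\Width+5$ that $g$ is the $j$-extension of $e$ if and only if $g$ is the $(j-\Width)$-extension of $e'$. To prove it, I would check that the expansive connected set of size $j$ represented by $e$ exists if and only if the one of size $j':=j-\Width\ge 5$ represented by $e'$ exists — a bookkeeping exercise relying only on the monotonicity of $\Fr$, the relations \cref{def:expansive fr,def:expansive > 4}, and $d^+(z)\ge 2$ — and then that, when they exist, the index shift $w'_k=w_{\Width+k}$ (for $k\ge 3$) makes $B=w_1<\dots<w_j$ and $B'=w'_1<\dots<w'_{j'}$ agree on their last two vertices: $w'_{j'}=\Fr^{j'-3}(z)=\Fr^{j-3}(v)=w_j$, hence $w'_{j'-1}=\LL(w'_{j'})=\LL(w_j)=w_{j-1}$. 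Since by \cref{eq:extends} the $j$-extension of $e$ is determined by $\Fr(\Ur(w_{j-1}))$ and $\Fr(\Ur(w_j))$ — that is, by $w_{j-1}$ and $w_j$ alone — and the $(j-\Width)$-extension of $e'$ is the same expression in $w'_{j'-1}$ and $w'_{j'}$, the two extensions coincide.

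Granting the lemma, the induction closes quickly. For the forward direction, suppose $g$ is the $j$-extension of $e$; by the first observation $\Width:=\Width(e)\le j-4$. If $j=\Width+4$, then $e$ has a $(\Width+4)$-extension (namely $g$), so $R(G)$ contains a regular edge $\BigE{e}\ReE{g}$ of weight $\Width+4=j$, which is the required $D$-edge. If $j>\Width+4$, then $j-\Width\ge 5$ and the lemma gives that $g$ is the $(j-\Width)$-extension of $e'=z\RR(z)$ with $z=\Fr^{\Width}(v)$; by the induction hypothesis at $j-\Width<j$ there is a $D$-edge from $\BigE{e'}$ to $\ReE{g}$ of weight $j-\Width$, and prepending the compact edge $\BigE{e}\BigE{e'}$ of weight $\Width$ — which exists because $\Fr(v)\neq w$ and $\Width<\infty$ — yields a $D$-edge from $\BigE{e}$ to $\ReE{g}$ of weight $j$. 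For the backward direction, let $P$ be a $D$-edge from $\BigE{e}$ to $\ReE{g}$ of weight $j$; by the second observation $\Width:=\Width(e)<\infty$. If $P$ is a single (necessarily regular) edge, then $j=\Width+4$, and by the definition of the regular edges out of $\BigE{e}$, $g$ is the $(\Width+4)$-extension, hence the $j$-extension, of $e$. Otherwise the first edge of $P$ is compact, hence of weight $\Width$ and ending at $\BigE{e'}$ with $e'=z\RR(z)$, $z=\Fr^{\Width}(v)$; deleting it leaves a $D$-edge from $\BigE{e'}$ to $\ReE{g}$ of weight $j-\Width\ge 5$, so by the induction hypothesis $g$ is the $(j-\Width)$-extension of $e'$, and the lemma gives that $g$ is the $j$-extension of $e$.

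I expect the reduction lemma — precisely, the claim that one of the two expansive connected sets exists exactly when the other does — to be the main obstacle. Each individual step there is a one-line consequence of the monotonicity of $\Fr$ together with \cref{def:expansive fr,def:expansive > 4}, but one must keep track of the index shift $w'_k=w_{\Width+k}$ linking $B$ with $B'$ and carefully dispatch the degenerate sub-cases — notably $\Width=1$, where $z=\Fr(v)=w_3$; the range of $i$ in \cref{def:expansive fr} collapsing for the smallest sizes; and vertices reaching the last vertex of $G$ — without breaking the correspondence.
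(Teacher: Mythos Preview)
Your proposal is correct and follows essentially the same approach as the paper's own proof: both directions reduce the $j$-extension of $e$ to the $(j-\Width)$-extension of $e'=z\RR(z)$ via the compact edge $\BigE{e}\BigE{e'}$, and close by induction. Your packaging of this reduction as a standalone lemma, together with the observation that the last two vertices of the two expansive blocks coincide (so \cref{eq:extends} yields the same edge), is exactly what the paper does inline; the only cosmetic difference is that the paper's backward direction inducts on the length $h$ of the $D$-edge rather than on $j$, which is equivalent since each compact edge decreases both.
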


\begin{proof}
 Suppose first that $g$ is the $j$-extension of $e$, and let $\Width$ be the width of $e$.  We prove by induction on $j$ that $R(G)$ has a $D$-edge of weight $j$ from $\ReE{e}$ to $\ReE{g}$.  The base case, in which $j \in \{3,4,\Width+4\}$, is trivial as $\ReE{e}\ReE{g}$ is a regular edge of $R(G)$ with weight $j$.  For the inductive step, let $B = w_1 < \ldots < w_j$ be the expansive connected set of size $j \geq 5$ that is represented by $e = w_1w_2$.  Note that $B$ exists because otherwise $e$ would have no $j$-extension.  By \cref{def:expansive fr}, $w_{j-2} = \Fr^{j-4}(v)$, thus $\Width < j-4$ as $d^+(w_{j-2}) \geq 2$ and $j \neq \Width+4$.  By definition, $R(G)$ has a compact edge from $\ReE{e}$ to $\BigE{e}_\Width$, for $e_\Width = w_{\Width+2}\RR(w_{\Width+2})$, because $w_{\Width+2} = \Fr^{\Width}(v)$ by~\cref{def:expansive fr}.  Moreover, by \cref{def:expansive fr,def:expansive > 4}, $B' = w_{\Width+2}, \RR(w_{\Width+2}), w_{\Width+3}, \ldots, w_j$ is an expansive connected set with at least five vertices.  By definition, $g$ is the $(j-\Width)$-extension of $e_\Width$, thus, by induction, there is a $D$-edge $P$ from $\BigE{e}_\Width$ to $\ReE{g}$ in $R(G)$ with $\Weight(P) = j-\Width$.  Hence, $\ReE{e}P$ is a $D$-edge of $R(G)$ from $\ReE{e}$ to $\ReE{g}$ with $\Weight(\ReE{e}P) = j$.  
 
 For the converse, suppose that $R(G)$ has a $D$-edge $P = \ReE{e}_0, \ldots, \ReE{e}_h$ from $\ReE{e}_0 = \ReE{e}$ to $\ReE{e}_h = \ReE{g}$ whose weight is $j$.  Note that, by definition, $\ReE{e}_i = \BigE{e}_i$ for every $1 < i < h$.  We prove by induction on $h$ that $g$ is the $j$-extension of $e$.  The base case $h = 1$ is trivial, because $\ReE{e}\ReE{g}$ is a regular edge of $R(G)$ only if $g$ is the $j$-extension of $e$.  For the inductive step, let $\Width$ be the width of $e = w_1w_2$ and recall that $e_1 = w_{\Width+2}\RR(w_{\Width+2})$, where $w_{\Width+2} = \Fr^{\Width}(w_1)$.  By definition, $\ReE{e}_1, \ldots, \ReE{e}_h$ is a $D$-edge of $R(G)$ with weight $(j-\Width)$, which implies that $g$ is the $(j-\Width)$-extension of $e_1$ by induction.  Note that $j-\Width \geq 5$ because $\ReE{e}_1 = \BigE{e}_1$.  Thus, by~\cref{def:expansive fr,def:expansive > 4}, $e_1$ represents an expansive connected set $B' = w_{\Width+2} < \RR(w_{\Width+2}) < w_{\Width+3} < \ldots < w_j$ such that $w_{i} = \Fr^{i-\Width}(w_{\Width+2}) = \Fr^{i-2}(w_1)$ for every $\Width \leq i \leq j-2$, $w_{j} = \Fr(w_{j-2})$, and $w_{j-1} = \LL(w_j)$.  Therefore, by~\cref{def:expansive fr,def:expansive > 4}, $B = w_1,\ldots, w_j$ is an expansive connected set with $|B| = j$ when $w_i = \Fr^{i-2}(w_1)$ for $3 \leq i \leq \Width$.  Consequently, by~\cref{eq:extends}, $e$ has $g$ as its $j$-extension.
\end{proof}

\begin{theorem}\label{thm:R(G)}
 If $G$ is a straight graph, then $R(G)$ is an acyclic digraph that has $O(m)$ vertices and edges.  Furthermore, $W \subseteq V(G)$ is (resp.\ fully) expansive if and only if\/ $W$ is encoded by a $D$-path of $R(G)$ (resp.\ from the source to the sink) whose weight is $|W|$.
\end{theorem}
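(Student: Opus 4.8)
The plan is to transport \Cref{thm:D(G)} to $R(G)$ through the block\nobreakdash-by\nobreakdash-block dictionary supplied by \Cref{thm:distance R(G)}: informally, that theorem says the $D$\nobreakdash-edges of $R(G)$ are precisely the $j$\nobreakdash-extensions of $G$, so $R(G)$ is nothing but $D(G)$ with each of its edges replaced by a $D$\nobreakdash-edge. Three things then have to be checked --- the size bound, acyclicity, and the encoding equivalence --- and only the last one needs any care.

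For the size bound, $V(R(G)) = \ER(G) \cup \{t\}$ has $2m + 1 = O(m)$ vertices; and every edge of $R(G)$ leaving $\SmE{e}$ is one of the at most four regular edges $\SmE{e}\SmE{g}, \SmE{e}\BigE{g}$ associated with a $j$\nobreakdash-extension $g$, $j \in \{3,4\}$ (there being at most one $j$\nobreakdash-extension per $j$), or the sink edge when $e$ is the pre-sink, while every edge leaving $\BigE{e}$ is one of the at most two regular edges for the unique $(\Width+4)$\nobreakdash-extension or the single compact edge; hence $R(G)$ has $O(m)$ edges. For acyclicity I would first record that the source of any $D$\nobreakdash-edge is $\SmE{f}$ when its weight is below $5$ and $\BigE{f}$ when its weight is at least $5$, for the appropriate $f \in E(G)$ (a $D$\nobreakdash-edge out of $\SmE{f}$ is a lone regular edge of weight $3$ or $4$; one out of $\BigE{f}$ is either a lone regular edge of weight $\Width + 4 \ge 5$ or starts with a compact edge and is still heavier), so that \Cref{thm:distance R(G)} applies to it and yields a corresponding edge of $D(G)$; in particular every regular edge of $R(G)$ projects onto an edge of $D(G)$. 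A compact edge $\BigE{e}\BigE{g}$ with $e = vw$ has $g = z\RR(z)$ for $z = \Fr^{\Width}(v)$, and $vw \in E(G)$ forces $v < w \le \Fr(v)$, whence $z \ge \Fr(v) > v$: a compact edge strictly advances the first endpoint of an edge. Consequently a cycle of $R(G)$ --- which cannot meet the sink $t$, as it has no out-edges --- that used only compact edges is impossible, while one using a regular edge can be cut just after such an edge and thereby written as a concatenation of $D$\nobreakdash-edges whose projections form a cycle of $D(G)$, contradicting \Cref{thm:D(G)}; adjoining the sink keeps $R(G)$ acyclic.

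The heart of the argument is the encoding equivalence. A $D$\nobreakdash-path $P$ of $R(G)$ is uniquely a concatenation $P_1, \dots, P_k$ of $D$\nobreakdash-edges, with $P_i$ running from $\ReE{e}_i$ to $\ReE{e}_{i+1}$ of weight $j_i$, and $P$ encodes the set $W$ whose blocks $B_1 < \dots < B_k$ are the size\nobreakdash-$j_i$ expansive connected sets represented by the $e_i$; by the previous observation the source type of each $P_i$ matches the hypothesis of \Cref{thm:distance R(G)}, so $e_{i+1}$ is the $j_i$\nobreakdash-extension of $e_i$, hence $e_1, \dots, e_{k+1}$ is a path of $D(G)$ of weight $\sum_i j_i = \Weight(P)$ encoding the very same $W$, and \Cref{thm:D(G)} then gives that $W$ is expansive with $|W| = \Weight(P)$. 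Conversely, starting from an expansive $W$ with blocks $B_1 < \dots < B_k$, \Cref{thm:D(G)} provides a path $e_1, \dots, e_{k+1}$ of $D(G)$ encoding it with $\Weight(e_ie_{i+1}) = |B_i| =: j_i$; I would then give each $e_i$ the endpoint type forced by $j_i$, namely $\SmE{e}_i$ when $j_i < 5$ and $\BigE{e}_i$ otherwise --- which for $i \le k$ is exactly the type dictated by $|B_i| = j_i$ --- and invoke \Cref{thm:distance R(G)} to obtain, for each $i$, a $D$\nobreakdash-edge $P_i$ from $\ReE{e}_i$ to $\ReE{e}_{i+1}$ of weight $j_i$; since the target type of a $D$\nobreakdash-edge is free whereas the source type of $P_{i+1}$ is determined by $j_{i+1} = |B_{i+1}|$, we may take the target of $P_i$ to agree with the source of $P_{i+1}$, so that $P_1, \dots, P_k$ is a $D$\nobreakdash-path of $R(G)$, of weight $|W|$, encoding $W$. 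For the ``fully expansive'' refinement one argues in the same way, using that a path from the source to the sink of $R(G)$ is forced to be a $D$\nobreakdash-path from $\SmE{s}$ (the source) to $\SmE{e}$ (the pre-sink, which is entered only through regular edges) followed by the sink edge $\SmE{e}t$ of weight $3$, that $s$ represents the source block while $\SmE{e}t$ plays the role of the pre-sink block $T$ of size $3$, and that the two endpoints of such a path being exactly the source and the sink corresponds to $B_1$ being the source block and $B_k = T$ --- mirroring the treatment of the sink edge $et$ in $D(G)$.

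I expect the only genuinely delicate point to be this last piece of bookkeeping: keeping the choice of endpoint type, $\SmE{f}$ or $\BigE{f}$, consistent along the whole path, so that a block of size at least $5$ is both entered and left through its $\BigE{\cdot}$ copy while a smaller block uses its $\SmE{\cdot}$ copy, together with the special role of the sink edge in realizing the last block. Everything else is a routine transfer of \Cref{thm:D(G)} across the dictionary established by \Cref{thm:distance R(G)}, so I would keep the write-up short and lean on those two theorems.
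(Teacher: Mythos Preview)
Your proposal is correct and follows the same route as the paper: both directions of the encoding equivalence are obtained by translating, block by block, between paths of $D(G)$ and $D$-paths of $R(G)$ via \Cref{thm:distance R(G)}, and then invoking \Cref{thm:D(G)}; your observation that the source type of a $D$-edge is automatically the one required by \Cref{thm:distance R(G)} is exactly the matching device the paper uses, and your freedom in choosing the target type to glue consecutive $D$-edges is the same as the paper's. You additionally spell out the $O(m)$ count and give an explicit acyclicity argument (compact edges strictly advance the first endpoint, regular edges project to $D(G)$), points the paper leaves implicit in the construction, but this is extra care rather than a different strategy.
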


\begin{proof}
 By \Cref{thm:D(G)}, any expansive set $W$ is encoded by a path $P = e_1, \ldots, e_{k+1}$ of $D(G)$.  Let $h = k$ if $e_{k+1}$ is the sink of $D(G)$ and $h = k+1$ otherwise.  For $1 \leq i \leq h$, let $\ReE{e}_i = \SmE{e}_i$ if $\Weight(e_ie_{i+1}) < 5$ and $\ReE{e}_i = \BigE{e}_i$ otherwise.  By Theorem~\ref{thm:distance R(G)}, there is a $D$-edge $P_i$ from $\ReE{e}_i$ to $\ReE{e}_{i+1}$ of weight $\Weight(e_ie_{i+1})$ for every $1 \leq i < h$.  If $h = k$, then $\ReE{e}_k = \SmE{e}_k$  because the unique edge form the pre-sink of $G$ in $D(G)$ has weight $3$.  Thus, regardless of the value of $h$, the edge of $R(G)$ from $\ReE{e}_k$ to $\ReE{e}_{k+1}$ is a $D$-edge of weight $\Weight(e_{k}e_{k+1})$.  Consequently, $P_1, \ldots, P_k$ is a $D$-path of $R(G)$ that encodes $W$.  Moreover, if $e_1$ is the source edge of $G$, then $\Weight(e_1e_2) = 3$ and, therefore, $\ReE{e}_1 = \SmE{e}_1$ is the source of $R(G)$ by \Cref{thm:distance R(G)}.  Hence, by \Cref{thm:D(G)}, $P_1, \ldots, P_k$ goes from the source of $R(G)$ to its sink when $W$ is fully expansive.
 
 The converse is similar: if $P_1, \ldots, P_k$ is a $D$-path of $R(G)$ that encodes a set $W$, then $P = e_1, \ldots, e_{k+1}$ is a path of $D(G)$ by \Cref{thm:distance R(G)}, where $e_i \in E(G)$ is the edge corresponding to the first edge of $P_i$ for $1 \leq i \leq k$, and $e_{k+1}$ corresponds to last edge of $P_{k}$ that happens to be the sink of $D(G)$ when $P_k$ ends at the sink of $R(G)$.  Moreover, $\Weight(P_i) = \Weight(e_ie_{i+1})$.  Thus, $P$ encodes $W$ which, by \Cref{thm:D(G)}, implies that $W$ is an expansive set of $G$ with $\Weight(P_1, \ldots, P_k)$ vertices.  Moreover, $W$ is fully expansive when $P_1, \ldots, P_k$ goes from the source to the sink of $R(G)$.
\end{proof}

The algorithm to compute a minimum $2$-dom of a given straight graph $G$ has three main steps.  \Cref{step:R(G),step:P} compute $R(G)$ and a path $P$ of minimum weight from the source to the sink of $R(G)$, respectively.  By \Cref{thm:distance R(G),thm:expansive is 2-dom,thm:minimum 2-dom is expansive}, $P$ encodes a minimum $2$-dom $W$ of $G$; the set $W$ is found in Step~\ref{step:W}.  The algorithm runs in $O(m)$ time when implemented as described below, where we write $[n] = [1,n]\cap \Naturals$.
\begin{discription}[label={\textbf{Step \arabic*:}},ref=\arabic*]
 \item [\bf Input:] $G$ is implemented with the sequence $v_1 < \ldots < v_n$ of its vertices and a function $\FrImpl\colon [n] \to [n]$ such that $\FrImpl(i) = j$ when $\Fr(v_i) = v_j$.  Both $V(G)$ and $\FrImpl$ are implemented with vectors, thus traversing $V(G)$ requires $O(n)$ time, whereas querying $\FrImpl(i)$ costs $O(1)$ time.  Note that $\hat{d}^+(i) = d^+(v_i) = \FrImpl(i) - i$ can be answered in $O(1)$ time as well.  We assume that $G$ contains the source and pre-sink blocks, as $O(n)$ time suffices to insert them into the structure.
 \item [\bf Step 0:] before computing $R(G)$, we build the map $\WidthImpl\colon [n] \to [n+1] \times [n]$ such that: if $v_i$ has width $\Width <\infty$, then $\WidthImpl(i) = (\Width, \Fr^{\Width}(v_i))$; otherwise, $\WidthImpl(i) = (n+1,i)$.  A single backward traversal of $V(G)$ suffices to compute $\WidthImpl$ in $O(n)$ time because, by definition,
 \begin{displaymath}
  \WidthImpl(i) = 
  \begin{cases}
   (n+1,i)                 & \text{if $\hat{d}^+(\FrImpl(i)) = 0$} \\
   (1, \FrImpl(i))         & \text{if $\hat{d}^+(\FrImpl(i)) \geq 2$} \\
   (1+\min\{\Width,n\}, w) & \text{otherwise, for $(\Width, w) = \WidthImpl(\FrImpl(i))$.}
  \end{cases}
 \end{displaymath}
 \item\label[step]{step:R(G)} to compute $R(G)$, first two vertices $(i,j,0)$ and $(i,j,1)$ representing $\SmE{e}$ and $\BigE{e}$ are created, for $e = v_iv_j \in E(G)$, $i \in [n]$, and $j \in (i,\FrImpl(i)]$.  This step consumes $O(m)$ time.  Then, for $i \in [n]$ and $j \in (i, \Fr(i))$, the edges of $R(G)$ from $(i,j,1)$ are inserted.  Let $(\Width, a) = \WidthImpl(i)$ and $b = \FrImpl(a)$; suppose $\Width \leq n$ as no edge from $(i,j,1)$ has to be inserted otherwise.  First, the edge from $(i,j,1)$ to $(a,a+1,1)$, representing the compact edge $\BigE{e}\BigE{g}$ for $e = v_iv_j$ and $g = v_a\RR(v_a)$, is inserted in $O(1)$ time.   To create the regular edges, note that, by~\cref{def:expansive fr}, $v_{b-1}$ and $v_b$ are the last two vertices of the expansive connected set of size $(\Width+4)$ that is represented by $v_iv_j$.  Clearly, the indices $x,y$ such that $v_xv_y$ is the edge defined by \cref{eq:extends}, when applied to $v_{b-1}$ and $v_b$, can be obtained in $O(1)$ time with a few applications of $\FrImpl$.  By definition, $v_xv_y$ is the $(\Width+4)$-extension of $v_iv_j$ if and only if $v_bv_x \not\in E(G)$ and $v_xv_y$ represents an expansive connected set.  These facts that can be determined in $O(1)$ time by observing whether $\FrImpl(b) < x$ and $\hat{d}^+(x) \geq 2$.  If affirmative, then the edges from $(i,j,1)$ to $(x,y,\bullet)$ of weight $\Width+4$ are inserted in $O(1)$ time.  Therefore, the edges from $(i,j,1)$ are created in $O(1)$ time.  Each regular edge from $(i,j,0)$ is inserted $O(1)$ time with a similar procedure.  Therefore, this step consumes $O(m)$ total time.
 \item\label[step]{step:P}  since $R(G)$ is acyclic (\Cref{thm:distance R(G)}), $P$ can be computed in $O(m)$ time.
 \item\label[step]{step:W}  traversing $P$ once, we can split $P$ into $D$-edges $P_1, \ldots, P_k$ while $\Weight(P_h)$ is computed for $h \in [k]$.  Let $(i,j,\bullet)$ be the first vertex of $P_h$.  By~\Cref{def:expansive > 4,def:expansive fr}, $O(\Weight(P_h))$ applications of $\FrImpl$ from $i$ are enough to find all the vertices of the expansive connected set $B_h$ of size $\Weight(P_h)$ that is represented by $v_iv_j$.  Then, the output $W = B_1 \cup \ldots \cup B_k$ can be computed in $O(\Weight(P)) = O(n)$ time.
\end{discription}
Note that $G$ can be encoded in $O(n)$ space, thus the algorithm is quadratic in the worst case.  We remark that many algorithms exist to compute a straight orientation $\vec{G}$ of a PIG graph $G$ in $O(m)$ time.  In particular, the algorithm in~\cite{DengHellHuangSJC1996} outputs $\vec{G}$ as required by the algorithm above.  Thus, when $G$ is a PIG graph represented with adjacency lists, a $2$-dom can be computed in linear time.

\section{Concluding remarks}
\label{sec:remarks}

In this note we developed an $O(m)$ time algorithm for the total $2$-dominating set problem on proper interval graphs, improving the previous $O(n^6)$ time algorithm by Chiarelli et al.~\cite{ChiarelliHartingerLeoniPujatoMilanic2018}.  Both of these algorithms work by finding a shortest path on a weigthed digraph $D$.  The main difference between them is that in our model the edges of $D$ represent connected sets with a large diameter.  The actual connected set represented by $e \in D$ is the one that reaches farther in the input (model of the) graph $G$.  One of the consequences defining the edges of $D$ in this way is that some connected sets that can be a part of the solution when $G$ is weighted are not considered.  Therefore, on the contrary to the algorithm by Chiarelli et al., our algorithm does not solve the problem when $G$ is weighted.

Our algorithm provides more evidence that the time required to solve problem of finding a total $k$-dominating set on a (proper) interval graph, for $k > 2$, is $o(n^{3k})$.  In our digraph $D$, each edge goes from a pair $vw$ to the another pair $uz$.  Certainly, we can extend this model to $k$-tuples; the idea would be to have an edge from a $k$-tuple $v$ to a $k$-tuple $w$ of weight $j$ when $w$ is the further tuple that can be reached with a ``block'' having $j$ vertices.  Intuitively, such a $k$-tuple $w$ should exist: if $B$ and $B'$ are two ``blocks'' of a total $k$-dominating set that begin with $v$ and neither of them is lexicographically larger than the other, then it should be possible to combine $B$ and $B'$ into a new block beginning with $v$ that is lexicographically larger than both $B$ and $B'$.  Thus, the tuple $w$ reaching further should exist.  The problem, however, is how to compute $w$ when building $D$.  The case $k = 2$ is easy because all the blocks have a peculiar structure.  We conjecture that, by following these ideas, the problem can be solved in $O(n^kk^k)$ time.


\end{document}